\newtheorem{theorem}{Theorem}
\newtheorem{lemma}{Lemma}
\theoremstyle{remark}
\newtheorem*{Remark}{Remark}
\newcommand{\y}{\boldsymbol{y}}
\newcommand{\z}{\boldsymbol{z}}
\newcommand{\p}{\boldsymbol{p}}
\newcommand{\q}{\boldsymbol{q}}
\newcommand{\blambda}{\boldsymbol{\lambda}}
\newcommand{\PI}{\boldsymbol{\varPi}}
\newcommand{\THETA}{\boldsymbol{\varTheta}}
\newcommand{\ones}{\boldsymbol{\mathit{1}}}
\newcommand{\Y}{\boldsymbol{Y}}
\renewcommand{\P}{\boldsymbol{P}}
\newcommand{\bR}{\mathbb{R}}
\newcommand{\prob}{\mathbb{P}}
\newcommand{\T}{{\!\top\!}}
\newcommand{\mode}[1]{{(\!#1\!)\!}}
\newcommand{\cpd}[1]{\text{\textlbrackdbl}#1\text{\textrbrackdbl}}
\DeclareMathOperator*{\minimize}{\textup{minimize}}
\newcommand{\st}{\text{subject to}}
\title{Kullback-Leibler Principal Component for Tensors is not NP-hard}
\author{
{Kejun Huang}%
\thanks{University of Minnesota, Minneapolis, MN 55455. Email: \texttt{huang663@umn.edu}}
\and
{Nicholas D. Sidiropoulos}%
\thanks{University of Virginia, Charlottesville, VA 22904. Email: \texttt{nikos@virginia.edu}}
}
\date{}
\begin{document}
\maketitle

\begin{abstract}
We study the problem of nonnegative rank-one approximation of a nonnegative tensor, and show that the globally optimal solution that minimizes the generalized Kullback-Leibler divergence can be efficiently obtained, i.e., it is not NP-hard. This result works for arbitrary nonnegative tensors with an arbitrary number of modes (including two, i.e., matrices). We derive a closed-form expression for the KL principal component, which is easy to compute and has an intuitive probabilistic interpretation. For generalized KL approximation with higher ranks, the problem is for the first time shown to be equivalent to multinomial latent variable modeling, and an iterative algorithm is derived that resembles the expectation-maximization algorithm. On the Iris dataset, we showcase how the derived results help us learn the model in an \emph{unsupervised} manner, and obtain strikingly close performance to that from supervised methods.
\end{abstract}

\section{Introduction}

Tensors are powerful tools for big data analytics~\cite{sidiropoulos2017tensor}, mainly thanks to the ability to \emph{uniquely} identify latent factors under mild conditions~\cite{kruskal1977three,sidiropoulos2000uniqueness}. On the other hand, most detection and estimation problems related to tensors are NP-hard~\cite{hillar2013most}. A similar situation is encountered in nonnegative matrix factorization, which is essentially unique under certain conditions~\cite{huang2014non,fu2017identifiability} and computationally NP-hard~\cite{Vavasis2009}. In a lot of applications, nonnegativity constraints are natural for tensor latent factors as well.

In practice, the latent factors of tensors and matrices are usually obtained by minimizing the mismatch between the data and the factorization model according to certain loss measures. The most popular loss measures the sum of the element-wise squared errors, which is conceptually appealing and conducive for algorithm design, thanks to the success of least-squares-based methods. For example, an effective algorithm for minimizing the least-squares loss is AO-ADMM~\cite{huang2016flexible}, and we refer the readers to the references therein for other least-squares-based methods. From an estimation theoretic point of view, the least-squares loss admits a maximum-likelihood interpretation under i.i.d. Gaussian noise. In a lot of applications, however, it remains questionable whether Gaussian noise is a suitable model for \emph{nonnegative} data.

We study the problem of fitting a nonnegative data matrix/tensor $\Y$ with low rank factors, using the generalized Kullback-Leibler (KL) divergence as the fitting criterion. Mathematically, given a $N$-way tensor data $\Y\in\bR^{J_1\times...\times J_N}$ and a target rank $K$, we try to find factor matrices constituting a canonical polyadic decomposition (CPD) that best approximates the data tensor $\Y$ in terms of generalized KL divergence:

\begin{equation}\label{prob:kl-ncp}
\begin{aligned}
\minimize_{\blambda,\{\P^\mode{n}\}} & 
	\sum_{j_1,...,j_N} \left(-Y_{j_1...j_N}\log\sum_{k=1}^{K}\lambda_k\prod_{n=1}^{N}P^\mode{n}_{j_nk} \right.\\
&	\left.\hspace*{80pt}	+ \sum_{k=1}^{K}\lambda_k\prod_{n=1}^{N}P^\mode{n}_{j_nk} \right) \\
\st & ~~~\blambda\geq0, 
\P^\mode{n}\geq0, \ones^\T\P^\mode{n}=\ones^\T, \forall~n\in[N].
\end{aligned}
\end{equation}
The conditions imposed in Problem~\eqref{prob:kl-ncp} besides nonnegativity are intended for resolving the trivial scaling ambiguity inherent in matrix factorization and tensor CPD models, and thus are without loss of generality: the columns of all the factor matrices are normalized to sum up to one, and the scalings are absorbed into the corresponding values in the vector $\blambda$.  We adopt the common notation $\cpd{\blambda;\P^\mode{1},...,\P^\mode{N}}$ to denote the tensor synthesized from the CPD model using these factors.

An instant advantage of this formulation is that the loss function of~\eqref{prob:kl-ncp} can be greatly simplified. Thanks to the constraints 
$\ones^\T\P^\mode{n}=\ones^\T$,
it is easy to see that
\[
\sum_{j_1,...,j_N}\sum_{k=1}^{K}\lambda_k\prod_{n=1}^{N}P^\mode{n}_{j_nk} = \sum_{k=1}^{K}\lambda_k.
\]
Therefore, Problem~\eqref{prob:kl-ncp} is mathematically equivalent to
\begin{equation}\label{prob:kl-ncp0}
\begin{aligned}
\minimize_{\blambda,\{\P^\mode{n}\}} & ~~
	-\hspace*{-5pt}\sum_{j_1,...,j_N} Y_{j_1...j_N}\log\sum_{k=1}^{K}\lambda_k\prod_{n=1}^{N}P^\mode{n}_{j_nk} 
	+ \ones^\T\blambda\\
\st & ~~~\blambda\geq0, 
\P^\mode{n}\geq0, \ones^\T\P^\mode{n}=\ones^\T, \forall~n\in[N].
\end{aligned}
\end{equation}
%

\section{Motivation: Probabilistic Latent Variable Modeling}

Most of the existing literature motivates the use of generalized KL-divergence by modeling the nonnegative integer data as generated from Poisson distributions~\cite{Chi2012}. Specifically, the model states that each entry of the tensor $Y_{j_1...j_N}$ is generated from a Poisson distribution with parameter $\varTheta_{j_1...j_N}$, and the underlying tensor $\THETA$ admits an exact CPD model
\(
\THETA = \cpd{\blambda;\P^\mode{1},...,\P^\mode{N}}.
\)
While it is a simple and reasonable model, the physical meaning behind the CPD model for the underlying Poisson parameters is not entirely clear.

In this paper we give the choice of generalized KL-divergence as the loss function a more compelling motivation. Consider $N$ categorical random variables $X_1,...,X_N$, each taking $J_1,...,J_N$ possible outcomes, respectively. Suppose these random variables are jointly drawn for a number of times, each time independently, and the outcome counts are recorded in a $N$-dimensional count tensor $\Y$, which is the data we are given. Denote the joint probability that $X_1=j_1,...,X_N=j_N$ as $\varPi_{j_1...j_N}$, i.e.
\[
\prob\left[ X_1=j_1,...,X_N=j_N \right] = \varPi_{j_1...j_N},
\]
where we have
\[
\sum_{j_1,...,j_N} \varPi_{j_1...j_N} = 1.
\]
Then the overall probability that, out of $M$ independent draws, the event of $X_1=j_1,...,X_N=j_N$ occurs $Y_{j_1...j_N}$ times is
\[
\prob\left[\Y\right] = \frac{M!}{\prod\left(Y_{j_1...j_N}!\right)}
	\prod_{j_1,...,j_N} \left(\varPi_{j_1...j_N}\right)^{Y_{j_1...j_N}},
\]
where
\begin{equation}\label{eq:M}
\sum_{j_1,...,j_N} Y_{j_1...j_N} = M.
\end{equation}
The maximum likelihood estimates of the parameters are simply
\[
\varPi_{j_1...j_N} = \frac{Y_{j_1...j_N}}{M}.
\]
However, this simple estimate of $\PI$ may not be of much practical use. First of all, the number of parameters we are trying to estimate is the same as the number of possible outcomes, which is not a parsimonious model; in other words, we are not exploiting any possible structure between the random variables $X_1,...,X_N$, other than the fact that exists a joint distribution between them.
Furthermore, as a result of over-parameterization, we will need the number of independent draws $M$ to be very large before we can have an accurate estimate of $\PI$, which is often not the case in practice.

A simple and widely used assumption we can impose onto the set of variables is the naive Bayes model: Suppose there is a hidden random variable $\varXi$, which is also categorical and can take $K$ possible outcomes, such that $X_1,...,X_N$ are conditionally independent given $\varXi$. The corresponding graphical model is given in Fig.~\ref{fig:gm}. Mathematically, this means
\[
\prob\left[
X_1=j_1,...,X_N=j_N | \varXi=k
\right] = 
\prod_{n=1}^N \prob\left[ X_n=j_n | \varXi=k \right].
\]
Then we have
\begin{align*}
& \prob\left[ X_1=j_1,...,X_N=j_N \right] \\
& = \sum_{k=1}^{K}\prob\left[ X_1=j_1,...,X_N=j_N | \varXi=k \right] \prob[\varXi=k] \\
& = \sum_{k=1}^{K} \prob[\varXi=k] \prod_{n=1}^N \prob\left[ X_n=j_n | \varXi=k \right].
\end{align*}
Denote
\[
\overline{\lambda}_k = \prob[\varXi=k] 
\text{~~and~~} 
P^\mode{n}_{j_nk} = \prob\left[ X_n=j_n | \varXi=k \right],
\]
then it is easy to see that
\begin{equation}\label{eq:Pi=cpd}
\PI = \cpd{\overline{\blambda};\P^\mode{1},...,\P^\mode{N}},
\end{equation}
which means $\PI$ admits an exact CPD model.
Even though naive Bayes seems to be a very specific model, it has recently been shown that it is far more general than meets the eye---no matter how dependent $X_1,...,X_N$ are, there always exists a hidden variable $\varXi$ such that the depicted naive Bayes model holds,  thanks to a link between tensors and probability established in \cite{kargas2017completing}.

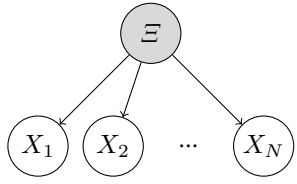
\begin{figure}[t]
\centering
\begin{tikzpicture}
\node at  (-0.5,1.5) [circle,draw,fill=gray!30,minimum size=8mm] (Xi) {$\varXi$};
\node at (-2,0) [circle,draw,minimum size=8mm,inner sep=-5pt] (X1) {$X_1$};
\node at (-1,0) [circle,draw,minimum size=8mm,inner sep=-5pt] (X2) {$X_2$};
\node at ( 0,0) {...};
\node at ( 1,0) [circle,draw,minimum size=8mm,inner sep=-5pt] (XN) {$X_N$};
\draw [->] (Xi) -- (X1);
\draw [->] (Xi) -- (X2);
\draw [->] (Xi) -- (XN);
\end{tikzpicture}
\caption{A graphical model depicting the probabilistic dependencies between the hidden variable $\varXi$ and the observed variables $X_1,...,X_N$, in a naive Bayes model.}
\label{fig:gm}
\end{figure}

Using the CPD parameterization of the multinomial parameter $\PI$, we formulate the maximum likelihood estimation of $\prob[\varXi=k]$ and 
$\prob[X_n=j_n|\varXi=k]$ as the following optimization problem:
\begin{equation}\label{prob:kl-ncp1}
\begin{aligned}
\minimize_{\overline{\blambda},\{\P^\mode{n}\}} & 
	~~-\hspace*{-5pt}\sum_{j_1,...,j_N} Y_{j_1...j_N}\log\sum_{k=1}^{K}\overline{\lambda}_k\prod_{n=1}^{N}P^\mode{n}_{j_nk}  \\
\st & ~~~\overline{\blambda}\geq0, \ones^\T\overline{\blambda}=1, \\
& ~~~ \P^\mode{n}\geq0, \ones^\T\P^\mode{n}=\ones^\T, \forall~n\in[N].
\end{aligned}
\end{equation}
Problem~\eqref{prob:kl-ncp1} is different from Problem~\eqref{prob:kl-ncp0}, but the difference is small---in~\eqref{prob:kl-ncp1}, $\overline{\blambda}$ is constrained to sum up to one, whereas in~\eqref{prob:kl-ncp0}, the sum of the elements of $\blambda$ is penalized in the loss function. In fact, the two problems are exactly equivalent, despite their apparent differences.

\begin{theorem}\label{thm:equivalence}
Let $(\blambda_\star,\{\P^\mode{n}_\star\})$ be an optimal solution for Problem~\eqref{prob:kl-ncp0}, then $(\overline{\blambda}_\star,\{\P^\mode{n}_\star\})$ is an optimal solution for Problem~\eqref{prob:kl-ncp1}, where
\[
\overline{\blambda}_\star = \frac{1}{\ones^\T\blambda_\star}\blambda_\star.
\]
\end{theorem}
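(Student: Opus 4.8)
The plan is to exploit the fact that, once the overall scale of $\blambda$ is factored out, the objective of Problem~\eqref{prob:kl-ncp0} decouples into one term depending only on that scale and one term that is exactly the objective of Problem~\eqref{prob:kl-ncp1}. First I would dispose of the degenerate case $M:=\sum_{j_1,\dots,j_N}Y_{j_1\dots j_N}=0$ (then $\Y=\boldsymbol{0}$ and every feasible point is optimal for both problems), and assume $M>0$. Observe that any feasible point of~\eqref{prob:kl-ncp0} with $\ones^\T\blambda=0$ has objective $+\infty$, so at any point with finite objective we have $\ones^\T\blambda>0$; hence I can reparameterize $\blambda = c\,\overline{\blambda}$ with $c=\ones^\T\blambda>0$, $\overline{\blambda}\geq0$, $\ones^\T\overline{\blambda}=1$, and this is a bijection between the finite-objective feasible points of~\eqref{prob:kl-ncp0} and the triples $(c,\overline{\blambda},\{\P^\mode{n}\})$ with $c>0$ and $(\overline{\blambda},\{\P^\mode{n}\})$ feasible for~\eqref{prob:kl-ncp1}.

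Next I would substitute this into the objective of~\eqref{prob:kl-ncp0}. Using $\sum_k\lambda_k\prod_n P^\mode{n}_{j_nk}=c\sum_k\overline{\lambda}_k\prod_n P^\mode{n}_{j_nk}$, the identity $\log(c\cdot s)=\log c+\log s$, the relation $\sum_{j_1,\dots,j_N}Y_{j_1\dots j_N}=M$ from~\eqref{eq:M}, and $\ones^\T\blambda=c$, the objective becomes
\[
\Big(c-M\log c\Big)\;+\;\Big(-\hspace*{-5pt}\sum_{j_1,\dots,j_N}Y_{j_1\dots j_N}\log\textstyle\sum_{k}\overline{\lambda}_k\prod_n P^\mode{n}_{j_nk}\Big),
\]
i.e.\ $g(c)+f(\overline{\blambda},\{\P^\mode{n}\})$, where $f$ is precisely the objective of~\eqref{prob:kl-ncp1} and $g(c)=c-M\log c$ is strictly convex on $(0,\infty)$ with unique minimizer $c=M$. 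Because the feasible set likewise decouples ($c$ ranges freely over $(0,\infty)$ while $(\overline{\blambda},\{\P^\mode{n}\})$ ranges over the feasible set of~\eqref{prob:kl-ncp1}, with no constraint linking the two), the optimal value of~\eqref{prob:kl-ncp0} equals $g(M)$ plus the optimal value of~\eqref{prob:kl-ncp1}, and a finite-objective feasible point of~\eqref{prob:kl-ncp0} is optimal if and only if its scale satisfies $c=M$ and its normalized part $(\overline{\blambda},\{\P^\mode{n}\})$ is optimal for~\eqref{prob:kl-ncp1}.

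Finally I would read off the claim. Given an optimal $(\blambda_\star,\{\P^\mode{n}_\star\})$ for~\eqref{prob:kl-ncp0}, its objective is finite, so $c_\star:=\ones^\T\blambda_\star>0$ and $\overline{\blambda}_\star=\blambda_\star/c_\star$ is well defined; the characterization above then yields both $c_\star=M$ (a by-product) and the optimality of $(\overline{\blambda}_\star,\{\P^\mode{n}_\star\})$ for~\eqref{prob:kl-ncp1}, which is exactly the statement. I expect the only delicate points to be bookkeeping: verifying that finiteness of the objective at the optimum of~\eqref{prob:kl-ncp0} genuinely excludes $\ones^\T\blambda_\star=0$ and forces $f(\overline{\blambda}_\star,\{\P^\mode{n}_\star\})<\infty$ (so that rescaling by $1/c_\star$ introduces no $0\cdot(-\infty)$ ambiguity in the $\log$ terms), and stating carefully that the separable minimization over $(c,\overline{\blambda},\{\P^\mode{n}\})$ transfers optimality in both directions; the analytic core, namely $\argmin_{c>0}(c-M\log c)=M$, is a one-line computation.
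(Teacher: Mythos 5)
Your proposal is correct, but it reaches the conclusion by a genuinely different route than the paper. The paper first proves a separate lemma (its Lemma~\ref{lmm:sum}) via the KKT stationarity condition of~\eqref{prob:kl-ncp0} with respect to $\blambda$ (assuming WLOG $\blambda_\star>0$ and invoking complementary slackness) to get $\ones^\T\blambda_\star=M$, and then argues Theorem~\ref{thm:equivalence} by contradiction: a better feasible point of~\eqref{prob:kl-ncp1} is rescaled by $M$ to produce a better feasible point of~\eqref{prob:kl-ncp0}, with the same $-M\log M+M$ bookkeeping you use. Your argument replaces both steps by the single observation that under the reparameterization $\blambda=c\,\overline{\blambda}$ the objective of~\eqref{prob:kl-ncp0} splits as $(c-M\log c)+f(\overline{\blambda},\{\P^\mode{n}\})$ with a decoupled feasible set, so optimality transfers in both directions and $\ones^\T\blambda_\star=M$ drops out as a by-product of minimizing $c-M\log c$; this is more elementary (no differentiability or KKT reasoning, no ``reduce the rank if some $\lambda_{\star k}=0$'' step) and yields the full equivalence rather than one direction. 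What the paper's route buys is the explicit per-component stationarity identity of Lemma~\ref{lmm:sum}, which it reuses later (e.g.\ to get $\lambda=M$ in Theorem~\ref{thm:kl-pc}), though your scaling argument could supply that too. Two small points of care: your claimed bijection should be stated for feasible points with $\ones^\T\blambda>0$ (finite objective additionally requires the $\log$ terms to be finite, which is not needed for the correspondence), and in the degenerate case $M=0$ it is not true that every feasible point of~\eqref{prob:kl-ncp0} is optimal (only $\blambda=\boldsymbol{0}$ is, and then $\overline{\blambda}_\star$ is undefined, so the theorem is vacuous there); neither affects the substance of your argument when $M>0$.
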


Before proving Theorem~\ref{thm:equivalence}, we first show the following lemma, which is interesting in its own right.

\begin{lemma}\label{lmm:sum}
If $\blambda_\star$ is optimal for~\eqref{prob:kl-ncp0}, then
\[
\ones^\T\blambda = \sum_{j_1,...,j_N} Y_{j_1...j_N}.
\]
\end{lemma}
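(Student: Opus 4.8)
The plan is to exploit the fact that in Problem~\eqref{prob:kl-ncp0} the vector $\blambda$ is free to scale---there is no constraint pinning $\ones^\T\blambda$ to a fixed value---and to read off what first-order optimality says along this particular direction. The key move is to restrict the objective to the ray $t\mapsto t\blambda_\star$ and differentiate in $t$.

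First I would record a feasibility/finiteness observation. If $\Y=0$ the claim is vacuous (both sides are $0$), so assume $\Y\neq0$ and write $M=\sum_{j_1,\dots,j_N}Y_{j_1\dots j_N}>0$. At any optimal $(\blambda_\star,\{\P^\mode{n}_\star\})$ the objective of~\eqref{prob:kl-ncp0} is finite, which forces $\sum_{k}\lambda_{\star,k}\prod_{n}P^\mode{n}_{\star,j_nk}>0$ for every index tuple with $Y_{j_1\dots j_N}>0$; in particular $\blambda_\star\neq0$. Next, fix the factor matrices at $\{\P^\mode{n}_\star\}$ and note that $t\blambda_\star$ is feasible for~\eqref{prob:kl-ncp0} for every $t>0$, since positive scaling preserves $\blambda\geq0$ and $\blambda$ carries no normalization constraint in~\eqref{prob:kl-ncp0}.

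Then I would compute the objective along this ray. Using $\log(t\,a)=\log t+\log a$ inside the first sum, the objective of~\eqref{prob:kl-ncp0} evaluated at $(t\blambda_\star,\{\P^\mode{n}_\star\})$ equals
\[
g(t) \;=\; -M\log t \;+\; t\,\ones^\T\blambda_\star \;+\; C,
\]
where $C$ collects all terms independent of $t$; this $g$ is differentiable (indeed strictly convex) on $(0,\infty)$. Since $(\blambda_\star,\{\P^\mode{n}_\star\})$ is optimal, $t=1$ must minimize $g$, so $g'(1)=0$. From $g'(t)=-M/t+\ones^\T\blambda_\star$ we get $\ones^\T\blambda_\star=M=\sum_{j_1,\dots,j_N}Y_{j_1\dots j_N}$, which is the claim.

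The only delicate point is staying away from the boundary of the feasible set: one must rule out $\blambda_\star=0$, or more generally a synthesized tensor with a zero entry where $Y$ is positive, which would send the objective to $+\infty$ and make $g$ ill-defined near $t=0$. That is precisely what the finiteness observation in the second step buys us; once past it, the argument is a one-line single-variable calculus computation, so I do not anticipate further obstacles.
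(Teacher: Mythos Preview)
Your argument is correct. The paper proves the same lemma via the full KKT stationarity conditions: it assumes (after possibly reducing the rank) that $\blambda_\star>0$ strictly, invokes complementary slackness so the nonnegativity constraints are inactive, sets $\partial/\partial\lambda_k$ of the objective to zero for each $k$, then multiplies each of the resulting equations by $\lambda_{\star k}$ and sums over $k$ to collapse everything to $\ones^\T\blambda_\star=\sum Y_{j_1\dots j_N}$. Your route is different in that you restrict to the single scaling direction $t\mapsto t\blambda_\star$ and do one-variable calculus, which is precisely the $\blambda_\star$-weighted combination of the paper's per-coordinate conditions obtained in one stroke. What your approach buys is that it sidesteps the strict-positivity assumption and the complementary-slackness discussion entirely---you only need $\blambda_\star\neq0$ and finiteness, both of which your second paragraph secures. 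What the paper's approach buys is the stronger per-coordinate identity (each partial derivative vanishes), though that extra information is not used elsewhere in the paper.
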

\begin{proof}
We show this by checking the optimality condition of~\eqref{prob:kl-ncp0} with respect to $\blambda$. Without loss of generality, we can assume that $\blambda_\star>0$ strictly, because otherwise the rank can be reduced. Since the inequality constraints with respect to $\blambda$ are not attained as equalities, their corresponding dual variables are equal to zero, according to complementary slackness. The KKT condition for~\eqref{prob:kl-ncp0} with respect to $\blambda$ then reduces to the gradient of the loss function of~\eqref{prob:kl-ncp0} with respect to $\blambda$ at $\blambda_\star$ being equal to zero. Specifically, setting the derivative with respect to $\lambda_k$ equal to zero yields
\[
\sum_{j_1,...,j_N}\frac{Y_{j_1...j_N}}{\sum_\kappa\lambda_{\star\kappa}\prod_\nu P^\mode{\nu}_{j_\nu k}}
\prod_{n=1}^{N}P^\mode{n}_{j_nk} = 1.
\]
Therefore
\begin{align*}
\sum_{k=1}^{K}\lambda_{\star k} 
& = \sum_{k=1}^{K}\left(\lambda_{\star k}
\sum_{j_1,...,j_N}\frac{Y_{j_1...j_N}}{\sum_\kappa\lambda_{\star\kappa}\prod_\nu P^\mode{\nu}_{j_\nu k}}
\prod_{n=1}^{N}P^\mode{n}_{j_nk}\right) \\
& = \sum_{j_1,...,j_N}\frac{Y_{j_1...j_N}}{\sum_\kappa\lambda_{\star\kappa}\prod_\nu P^\mode{\nu}_{j_\nu k}}
\left(\sum_{k=1}^{K}\lambda_{\star k}\prod_{n=1}^{N}P^\mode{n}_{j_nk}\right) \\
& = \sum_{j_1,...,j_N} Y_{j_1...j_N}. \qedhere
\end{align*}
\end{proof}

We now prove Theorem~\ref{thm:equivalence} with the help of Lemma~\ref{lmm:sum}.
\begin{proof}[Proof of Theorem~\ref{thm:equivalence}]
We show that $(\overline{\blambda}_\star,\{\P^\mode{n}_\star\})$ is optimal for~\eqref{prob:kl-ncp1} via contradiction.

Suppose $(\overline{\blambda}_\star,\{\P^\mode{n}_\star\})$ is not optimal for~\eqref{prob:kl-ncp1}, then there exists a feasible point $(\overline{\blambda}_\flat,\{\P^\mode{n}_\flat\})$ such that
\begin{align*}
-\hspace*{-5pt}\sum_{j_1,...,j_N} &
Y_{j_1...j_N}\log\sum_{k=1}^{K}\overline{\lambda}_{\flat k}
\prod_{n=1}^{N}P^\mode{n}_{\flat j_nk}\\
& < -\hspace*{-5pt}\sum_{j_1,...,j_N} 
Y_{j_1...j_N}\log\sum_{k=1}^{K}\overline{\lambda}_{\star k}
\prod_{n=1}^{N}P^\mode{n}_{\star j_nk} 
\end{align*}
Define $\blambda_\flat = M\overline{\blambda}_\flat$, then  $(\blambda_\flat,\{\P^\mode{n}_\flat\})$ is clearly feasible for~\eqref{prob:kl-ncp0}. Furthermore, we have
\begin{align*}
& -\hspace*{-5pt}\sum_{j_1,...,j_N}
Y_{j_1...j_N}\log\sum_{k=1}^{K}\lambda_{\flat k}
\prod_{n=1}^{N}P^\mode{n}_{\flat j_nk} + \ones^\T\blambda_\flat\\
& = -\hspace*{-5pt}\sum_{j_1,...,j_N}
Y_{j_1...j_N}\log\sum_{k=1}^{K}\overline{\lambda}_{\flat k}
\prod_{n=1}^{N}P^\mode{n}_{\flat j_nk} - M\log M + M\\
& < -\hspace*{-5pt}\sum_{j_1,...,j_N} 
Y_{j_1...j_N}\log\sum_{k=1}^{K}\overline{\lambda}_{\star k}
\prod_{n=1}^{N}P^\mode{n}_{\star j_nk} - M\log M + M \\
& = -\hspace*{-5pt}\sum_{j_1,...,j_N}
Y_{j_1...j_N}\log\sum_{k=1}^{K}\lambda_{\flat k}
\prod_{n=1}^{N}P^\mode{n}_{\star j_nk} + \ones^\T\blambda_\star,
\end{align*}
where the equalities stems from Lemma~\ref{lmm:sum}.
This means $(\blambda_\flat,\{\P^\mode{n}_\flat\})$ gives a smaller loss value for~\eqref{prob:kl-ncp0} than that of $(\blambda_\star,\{\P^\mode{n}_\star\})$, and contradicts our assumption that $(\blambda_\star,\{\P^\mode{n}_\star\})$ is optimal for~\eqref{prob:kl-ncp0}.
\end{proof}
A similar but less general result for the case when $N=2$ is given in~\cite{Ho2008}, in the context of nonnegative matrix factorization using generalized KL-divergence loss.

The take home point of this section is that we can find the maximum likelihood estimate of the hidden variable in the naive Bayes model by taking the nonnegative CPD of the data tensor. In a way, our analysis suggests that the generalized KL-divergence is a more suitable loss function to fit nonnegative data than, for example, the $L_2$ loss.

\section{KL Principal Component}

Now that we have established how important Problem~\eqref{prob:kl-ncp0} is in probabilistic latent variable modeling, we focus on a specific case of~\eqref{prob:kl-ncp0} when $K=1$. This case corresponds to extracting the ``principal component'' of a nonnegative tensor under the generalized KL-divergence loss. We will show, in this section, that this specific problem is not only computationally tractable, but admits an extremely simple closed form solution.

Let us first rewrite Problem~\eqref{prob:kl-ncp0} with $K=1$. In this case, the diagonal loadings $\blambda$ become a scalar $\lambda$, and the individual factor matrices $\P^\mode{n}$ become vectors $\p^\mode{n}$:
\begin{equation}\label{prob:kl-pc}
\begin{aligned}
\minimize_{\blambda,\{\P^\mode{n}\}} & ~~
	-\hspace*{-5pt}\sum_{j_1,...,j_N} Y_{j_1...j_N}\log\lambda\prod_{n=1}^{N}p^\mode{n}_{j_n} 
	+ \lambda\\
\st & ~~~\lambda\geq0, 
\p^\mode{n}\geq0, \ones^\T\p^\mode{n}=1, \forall~n\in[N].
\end{aligned}
\end{equation}
A salient feature of this case when $K=1$ is that there is no summation in the $\log$, just a product. Therefore, we can equivalently write Problem~\eqref{prob:kl-pc} as
\begin{equation}\label{prob:kl-pc_convex}
\begin{aligned}
\minimize_{\blambda,\{\P^\mode{n}\}} & ~~
	-\hspace*{-5pt}\sum_{j_1,...,j_N} Y_{j_1...j_N}
	\left(\log\lambda + \sum_{n=1}^{N}\log p^\mode{n}_{j_n} \right)
	+ \lambda\\
\st & ~~~\lambda\geq0, 
\p^\mode{n}\geq0, \ones^\T\p^\mode{n}=1, \forall~n\in[N].
\end{aligned}
\end{equation}
Noticing that $-\log$ is a convex function, the exciting observation we see in Problem~\eqref{prob:kl-pc_convex} is that it is in fact convex! This already means that it can be solved optimally and efficiently~\cite{boyd2004convex}, but we will in fact show a lot more: that it admits very simple and intuitive {\em closed-form} solution.

\begin{Remark}
It may seem obvious from our derivation that once we try rewriting Problem~\eqref{prob:kl-ncp0} with $K=1$, one will immediately see that this problem has hidden convexity in it. However, to the best of our knowledge, we are the first to point out this fact. In hindsight, there is a subtle caveat that plays a key role in spotting this hidden convexity: we fixed the inherent scaling ambiguities by constraining all the columns of the factor matrices to sum up to one; as a result, the sum of all the entries of the reconstructed tensor, which appears in the generalized KL-divergence loss, boils down to simply the sum of the diagonal loadings. If this trick is not applied to the problem formulation, there is still a multi-linear term in the loss of~\eqref{prob:kl-pc_convex}, and the hidden convexity is not at all obvious.
\end{Remark}

We now derive an optimal solution for Problem~\eqref{prob:kl-pc_convex} by checking the KKT conditions. For convex problems such as~\eqref{prob:kl-pc_convex}, KKT condition is not only necessary, but also sufficient for optimality.
Using Lemma~\ref{lmm:sum}, we immediately have that
\begin{equation*}
\lambda = \sum_{j_1,...,j_N} Y_{j_1...j_N} = M.
\end{equation*}
For the variable $\p^\mode{n}$, let $\xi^\mode{n}$ denote the dual variable corresponding to the equality constraint $\ones^\T\p^\mode{n}=1$, and $\q^\mode{n}$ denote the nonnagative dual variable corresponding to the inequality constraint $\p^\mode{n}\geq0$, for all $n=1,...,N$. 
The loss function in~\eqref{prob:kl-pc_convex} separates down to the components, and the term corresponding to $p^\mode{n}_{j_n}$ is
\[
-\sum_{\substack{j_1,...,j_{n-1},\\j_{n+1},...,j_N}} Y_{j_1...j_N}
\log p^\mode{n}_{j_n}
= -y^\mode{n}_{j_n}\log p^\mode{n}_{j_n},
\]
where we denote
\begin{equation}\label{eq:y}
y^\mode{n}_{j_n} = \sum_{\substack{j_1,...,j_{n-1},\\j_{n+1},...,j_N}} Y_{j_1...j_N}.
\end{equation}
We see that $p^\mode{n}_{j_n}$ cannot be equal to zero if $y^\mode{n}_{j_n}\neq0$, because otherwise it will drive the loss value up to $+\infty$; according to complementary slackness, this means the corresponding $q^\mode{n}_{j_n}=0$. If $y^\mode{n}_{j_n}=0$, then $p^\mode{n}_{j_n}$ does not directly affect the loss value of~\eqref{prob:kl-pc_convex}, even when it equals to zero, using the convention that $0\log0=0$. Since we have the constraint $\ones^\T\p^\mode{n}=1$, such a $p^\mode{n}_{j_n}$ should be equal to zero at optimality, otherwise the other entries in $\p^\mode{n}$ will be smaller, leading to a larger loss value in~\eqref{prob:kl-pc_convex}.

Suppose $y^\mode{n}_{j_n}\neq0$. Setting the derivative of the Lagrangian with respect to $p^\mode{n}_{j_n}$ equal to zero, we have
\[
-\frac{y^\mode{n}_{j_n}}{p^\mode{n}_{j_n}}
- q^\mode{n}_{j_n} + \xi^\mode{n} = 0.
\]
As we explained, $q^\mode{n}_{j_n}=0$ according to complementary slackness. Therefore,
\[
p^\mode{n}_{j_n}=y^\mode{n}_{j_n}/\xi^\mode{n}.
\]
The dual variable $\xi^\mode{n}$ should be chosen so that the equality constraint $\ones^\T\p^\mode{n}=1$ is satisfied. Together with our argument that $p^\mode{n}_{j_n}=0$ if $y^\mode{n}_{j_n}=0$, we come to the conclusion that
\begin{equation*}
p^\mode{n}_{j_n} = \frac{y^\mode{n}_{j_n}}{\ones^\T\y^\mode{n}}
= \frac{y^\mode{n}_{j_n}}{M}.
\end{equation*}

The result we derived in this section is summarized in the following theorem.
\begin{theorem}\label{thm:kl-pc}
The KL principal component of a nonnegative tensor $\Y$, i.e., the solution to Problem~\eqref{prob:kl-pc}, is
\begin{equation}\label{eq:kl-pc}
\lambda=M~~\text{and}~~
p^\mode{n}_{j_n} = \frac{y^\mode{n}_{j_n}}{M},
\end{equation}
where $M$ and $y^\mode{n}_{j_n}$ are defined in~\eqref{eq:M} and~\eqref{eq:y}, respectively.
\end{theorem}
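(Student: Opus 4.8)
The plan is to capitalize on the hidden convexity already highlighted: rewrite Problem~\eqref{prob:kl-pc} in the equivalent form~\eqref{prob:kl-pc_convex}, whose objective is a sum of terms $-Y_{j_1\dots j_N}\bigl(\log\lambda+\sum_{n}\log p^\mode{n}_{j_n}\bigr)$ plus the linear term $\lambda$. Since $-\log$ is convex and the feasible set is the product of a halfline with $N$ probability simplices, the problem is convex, so the KKT conditions are both necessary and sufficient for global optimality. It therefore suffices to exhibit the candidate~\eqref{eq:kl-pc} together with dual multipliers that jointly satisfy stationarity, primal and dual feasibility, and complementary slackness.

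First I would dispose of $\lambda$. Lemma~\ref{lmm:sum} applied with $K=1$ gives $\lambda_\star=\sum_{j_1\dots j_N}Y_{j_1\dots j_N}=M$; equivalently, since $\lambda$ decouples from the $\p^\mode{n}$ in~\eqref{prob:kl-pc_convex}, setting $\tfrac{\partial}{\partial\lambda}(-M\log\lambda+\lambda)=0$ yields $\lambda=M$ directly, and the constraint $\lambda\ge0$ is inactive (assume $M>0$, i.e., $\Y$ is not the all-zeros tensor, otherwise the statement is degenerate). Next, observe that the remaining objective separates over modes and over the entries of each $\p^\mode{n}$: collecting the terms containing $p^\mode{n}_{j_n}$ and summing out the other indices produces precisely $-y^\mode{n}_{j_n}\log p^\mode{n}_{j_n}$ with $y^\mode{n}_{j_n}$ the mode-$n$ marginal~\eqref{eq:y}. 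Thus for each fixed $n$ the subproblem is $\min\,\{\,-\sum_{j_n}y^\mode{n}_{j_n}\log p^\mode{n}_{j_n}\;:\;\p^\mode{n}\ge0,\ \ones^\T\p^\mode{n}=1\,\}$.

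For an index $j_n$ with $y^\mode{n}_{j_n}>0$ I would write the Lagrangian with a free multiplier $\xi^\mode{n}$ on $\ones^\T\p^\mode{n}=1$ and a multiplier $q^\mode{n}_{j_n}\ge0$ on $p^\mode{n}_{j_n}\ge0$; any feasible point with $p^\mode{n}_{j_n}=0$ there makes the objective $+\infty$, so at the optimum $p^\mode{n}_{j_n}>0$, hence $q^\mode{n}_{j_n}=0$ by complementary slackness, and stationarity $-y^\mode{n}_{j_n}/p^\mode{n}_{j_n}+\xi^\mode{n}=0$ gives $p^\mode{n}_{j_n}=y^\mode{n}_{j_n}/\xi^\mode{n}$. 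Imposing $\ones^\T\p^\mode{n}=1$ fixes $\xi^\mode{n}=\sum_{j_n}y^\mode{n}_{j_n}=M$, the last equality because summing all mode-$n$ marginals recovers the grand total~\eqref{eq:M}. For indices with $y^\mode{n}_{j_n}=0$, the entry $p^\mode{n}_{j_n}$ is absent from the objective (with $0\log0=0$), and placing any mass on it would only force a strictly larger value through the strictly decreasing $-\log$ on the coordinates with positive marginal, so $p^\mode{n}_{j_n}=0$ at optimality, which can be made to satisfy the remaining stationarity condition with a suitable $q^\mode{n}_{j_n}\ge0$. Assembling, $p^\mode{n}_{j_n}=y^\mode{n}_{j_n}/M$ for all $j_n,n$ together with $\lambda=M$ satisfies every KKT condition, and by convexity this point is globally optimal, establishing~\eqref{eq:kl-pc}.

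The main obstacle I anticipate is the careful handling of the zero-marginal coordinates: because $-\log$ diverges at the boundary, the objective is extended-real-valued and its effective domain is not closed, so the clean KKT argument needs the side remarks that (i) $0\log0=0$ is the correct convention (it is the continuous extension and the one consistent with the multinomial log-likelihood), (ii) coordinates with $y^\mode{n}_{j_n}>0$ stay bounded away from $0$ at any minimizer, and (iii) coordinates with $y^\mode{n}_{j_n}=0$ must vanish at the optimum even though they do not appear in the objective. Everything else---convexity, separability across modes, and the two one-line stationarity computations---is routine.
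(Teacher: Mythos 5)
Your proposal is correct and follows essentially the same route as the paper: rewrite~\eqref{prob:kl-pc} as the convex problem~\eqref{prob:kl-pc_convex}, obtain $\lambda=M$ from Lemma~\ref{lmm:sum} (or the decoupled derivative), and verify the KKT conditions mode by mode, with the same complementary-slackness argument for $y^\mode{n}_{j_n}>0$ and the same separate treatment of zero-marginal coordinates. Your added care about the $0\log 0=0$ convention and the non-closed effective domain is a nice touch but does not change the argument.
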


Now let us take a deeper look at the solution we derived for Problem~\eqref{prob:kl-pc}. Suppose the data tensor $\Y$ is generated by drawing from the joint distribution $\prob[X_1,...,X_N]$ $M$ times. Our definition of $\y^\mode{n}$ in~\eqref{eq:y} essentially summarizes the number of times each possible outcomes of $X_n$ occurs, regardless of the outcomes of the other random variables. As a result, the optimal KL principal component factor $\p^\mode{n}$ is in fact the maximum likelihood estimate of the marginal distribution $\prob[X_n]$. On hindsight, the simple solution we provide for KL principal component in Theorem~\ref{thm:kl-pc} becomes very natural. The case when $K=1$ means the latent variable $\varXi$ can only take one possible outcome with probability one, which means $\varXi$ is not random. In other words, we are basically assuming that $X_1,...,X_N$ are \emph{independent} from each other. As a result, the joint distribution factors into the product of the marginal distributions
\[
\prob\left[X_1,...,X_N\right] = \prod_{n=1}^N \prob[X_n],
\]
and the marginal distributions can be simply estimated by ``marginalizing'' the observations, collected in $\Y$, and then normalizing to sum up to one.
This is elementary in probability. However, in the context of finding the principal component of a nonnegative \emph{tensor} using generalized KL-divergence, it is not at all obvious. Furthermore, the argument based on categorical random variables only applies to nonnegative \emph{integer} data, whereas our derivation of the KL principal component of a tensor is not restricted to integers or rational numbers, but works for general real nonnegative numbers as well.

\section{KL Approximation with Higher Ranks}\label{sec:em}


If $K>1$ in Problem~\eqref{prob:kl-ncp0}, there is more than one term in the logarithm; as a result, the nice transformation from~\eqref{prob:kl-pc} to~\eqref{prob:kl-pc_convex} cannot be directly applied. There is a way to apply something similar, and that is through the use of Jensen's inequality~\cite{jensen1906fonctions} (applied to the $-\log$ function)
\[
-\log\mathbb{E}\z \leq - \mathbb{E}\log\z.
\]
We use this inequality to define majorization functions for the design of an iterative upperbound minimization algorithm~\cite{razaviyayn2013unified}.

Suppose at the end of iteration $t$, the obtained updates are $\blambda^t$ and $\{\P^{\mode{n}\,t}\}$. At the next iteration, we define
\begin{equation}\label{eq:psi}
\varPsi_{j_1...j_Nk}^{t} = \lambda_k^t\prod_{n=1}^{N}P^{\mode{n}\,t}_{j_nk} \bigg/
\sum_{\kappa=1}^{K}\lambda_\kappa^t\prod_{n=1}^{N}P^{\mode{n}\,t}_{j_n\kappa}.
\end{equation}
According to this definition, it is easy to see that $\varPsi_{j_1...j_Nk}^{t}\geq0$ and $\sum_{k=1}^K\varPsi_{j_1...j_Nk}^{t}=1$. Assuming $\varPsi_{j_1...j_Nk}^{t}>0$, we have
\begin{align}\label{eq:upperbound}
&-Y_{j_1...j_N}\log\sum_{k=1}^{K}\lambda_k\prod_{n=1}^{N}P^\mode{n}_{j_nk} \nonumber\\
&= -Y_{j_1...j_N}\log\sum_{k=1}^{K}\frac{\varPsi_{j_1...j_Nk}^{t}}{\varPsi_{j_1...j_Nk}^{t}}
\lambda_k\prod_{n=1}^{N}P^\mode{n}_{j_nk} \nonumber\\
&\leq -Y_{j_1...j_N}
\sum_{k=1}^{K}\varPsi_{j_1...j_Nk}^{t}\log\lambda_k\prod_{n=1}^{N}P^\mode{n}_{j_nk} + \text{const.} 
\end{align}
Furthermore, equality is attained if $\blambda=\blambda^t$ and $\P^\mode{n} = \P^{\mode{n}\,t}, \forall n\in[N]$. This defines a majorization function for iteration $t+1$, and the minimizer of~\eqref{eq:upperbound} is set to be the update of this iteration. Since~\eqref{eq:upperbound} and the loss function of~\eqref{prob:kl-ncp0} are both smooth, the convergence result of the successive upperbound minimization (SUM) algorithm~\cite{razaviyayn2013unified} can be applied to establish that this procedure converges to a stationary point of Problem~\eqref{prob:kl-ncp0}. We should stress again that this procedure is made easy only after the multi-linear term in~\eqref{prob:kl-ncp} is equivalently replaced by the sum of the diagonal loadings in~\eqref{prob:kl-ncp0} through our careful problem formulation. Otherwise, the multi-linear term still remains, which together with~\eqref{eq:upperbound} does not end up being a simpler function to optimize.

The majorization function~\eqref{eq:upperbound} is nice, not only because it is convex, but also since it is reminiscent of the loss function~\eqref{prob:kl-pc_convex} when $K$ is equal to one---it decouples the variables down to the canonical components, i.e., $\lambda_k$ and the $k$-th columns of $\P^\mode{1},...,\P^\mode{N}$; each of the sub-problems takes the form of~\eqref{prob:kl-pc_convex}, replacing $Y_{j_1...j_N}$ with $Y_{j_1...j_N}\varPsi_{j_1...j_Nk}^{t}$. As a result, the update for iteration $t+1$ boils down to something similar to what we have derived in the previous iteration. Specifically, define
\[
M^t_k = \sum_{\substack{j_1,...,j_N}} Y_{j_1...j_N}\varPsi_{j_1...j_Nk}^{t}
\]
and
\[
y^{\mode{n}\,t}_{j_nk} = \sum_{\substack{j_1,...,j_{n-1},\\j_{n+1},...,j_N}} Y_{j_1...j_N}\varPsi_{j_1...j_Nk}^{t},
\]
then
\begin{equation}\label{eq:em}
\lambda_k^{t+1} = M_k^t
\quad\text{and}\quad
P_{j_nk}^{\mode{n}\,t+1} = y^{\mode{n}\,t}_{j_nk} \Big/ M_k^t.
\end{equation}

A nice probabilistic interpretation can be made to understand this algorithm. In the context of probabilistic latent variable modeling, the conditional distributions $\prob[X_n|\varXi]$ can be easily estimated if the joint observation $X_1=j_1,...,X_N=j_N$, \emph{and} $\varXi=k$ is given, because then we can collect all the observations with $\varXi=k$ and use the techniques derived in the previous section. 
Now that we do not observe $\varXi$, what we can do is to try to estimate $\prob[\varXi|X_1,...,X_N]$ instead. The $\varPsi$ defined in~\eqref{eq:psi} does exactly the job using the current estimate of $\blambda$ and $\{\P^\mode{n}\}$. Using this estimated posterior distribution of $\varXi$, we have a guess of the portion of $Y_{j_1...j_N}$ that jointly occurs with $\varXi=k$, and use that to obtain a new estimate of $\prob[\varXi]$ and $\prob[X_n|\varXi]$.
This is exactly the idea behind the expectation-maximization (EM) algorithm~\cite{dempster1977maximum}. Almost the same algorithm has been derived by Shashanka \textit{et al.}~\cite{Shashanka2008}, and the special case when $N=2$ is the EM algorithm for probabilistic latent semantic analysis (pLSA)~\cite{Hofmann1999}. However, we should mention that these algorithms were originally derived in the context of multinomial latent variable modeling, and without the help of Theorem~\ref{thm:equivalence}, it was not previously known that they can be used for generalized KL-divergence fitting as well.

Computationally, although the definition of $\varPsi$ in~\eqref{eq:psi} helps us obtain simple expressions~\eqref{eq:em} and intuitive interpretations, we do not necessarily need to explicitly form them when implementing the method for memory/computation efficiency considerations. To calculate $\blambda^{t+1}$ and $\{\P^{\mode{n}\,t+1}\}$ as in~\eqref{eq:em}, we first define $\widetilde{\Y}^t$ such that
\[
\widetilde{Y}^t_{j_1...j_N} = Y_{j_1...j_N}\bigg/
\sum_{\kappa=1}^{K}\lambda_\kappa^t\prod_{n=1}^{N}P^{\mode{n}\,t}_{j_n\kappa}.
\]
This operation requires passing through the data once, and if $\Y$ is sparse, $\widetilde{\Y}^t$ has exactly the same sparsity structure. Then,
\begin{equation}\label{eq:lambda}
\lambda_k^{t+1} = \lambda_k^t \widetilde{\Y}^t {\times_1}\p_k^{\mode{1}\,t} ... {\times_N}\p_k^{\mode{N}\,t},
\end{equation}
where $\times_n$ denotes the $n$-mode tensor-vector multiplication~\cite{Bader2007}, and $\p_k^{\mode{n}\,t}$ denotes the $k$-th column of $\P^{\mode{n}\,t}$. As for the factor matrices,
\begin{equation}\label{eq:factor}
\P^{\mode{n}\,t+1} = \P^{\mode{n}\,t} \ast \textsc{Mttkrp}\left(\widetilde{\Y}^t, \{\P^{\mode{\nu}\,t}\},n\right)
\end{equation}
followed by column normalization to satisfy the sum-to-one constraint,
where $\ast$ denotes matrix Hadamard (element-wise) product, and \textsc{Mttkrp} stands for the $n$-mode matricized tensor times Khatri-Rao product of all the factor matrices $\{\P^{\mode{\nu}\,t}\}$ except the $n$-th one.

It is interesting to notice that the update rules~\eqref{eq:lambda} and~\eqref{eq:factor} 
are somewhat similar to the widely used multiplicative-update (MU) rule for NMF~\cite{lee2001algorithms} and NCP~\cite{Chi2012}. 
The big difference lies in the fact that MU updates the factors \emph{alternatingly}, whereas EM updates the factors \emph{simultaneously}. This makes the EM algorithm extremely easy to parallelize---for $N$-way factorizations, we can simply take $N$ cores, each taking care of the computation for the $n$-th factor, and we can expect an almost $\times N$ acceleration, if the sizes of all the modes are similar.

\section{Illustrative Example}

\begin{figure}[t]
\centering
\hspace*{-20pt}
\includegraphics[width=1.1\linewidth]{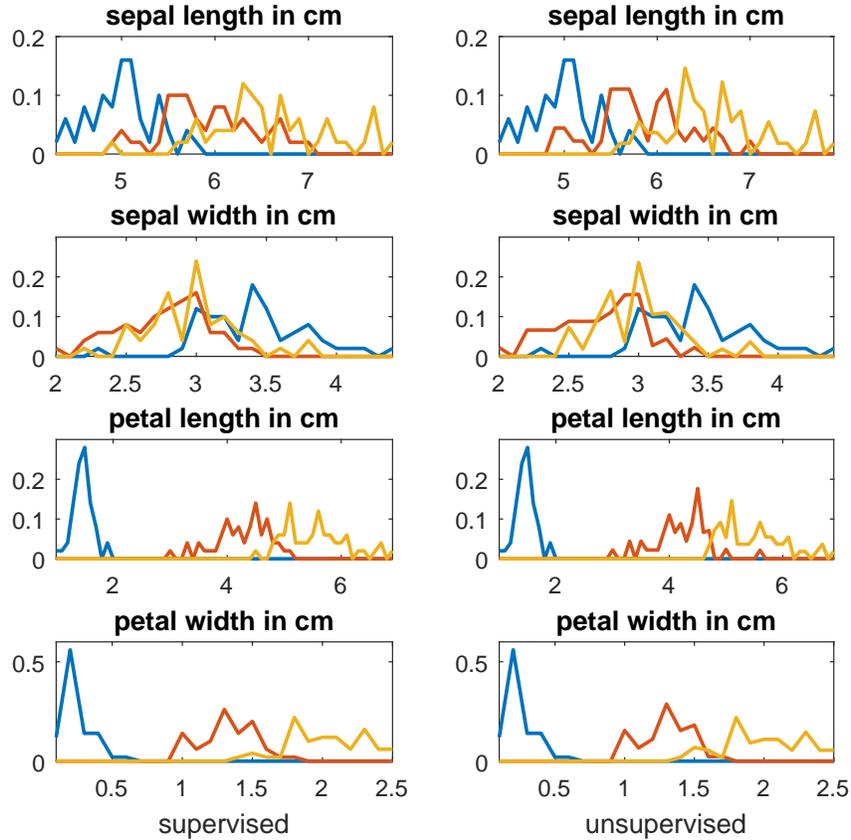}
\vspace{-20pt}
\caption{Learned $\mathbb{P}$[feature$|$class] in three colors for the three iris classes.
Left: class label is given. Right: class label not given.}
\label{fig:iris}
\end{figure}

We give an illustrative example using the Iris data set downloaded from the UCI Machine Learning repository. The data set contains 3 classes of 50 instances each, where each class refers to a type of iris plant. For each sample, four features are collected: sepal length/width and petal length/width in cm. The measurements are discretized into 0.1cm intervals, and the four features range between $4.3\sim7.9$, $2.0\sim4.4$, $1.0\sim6.9$, and $0.1\sim2.5$cm, respectively. We therefore form a 4-way tensor with dimension $37\times25\times60\times25$; for a new data sample, the corresponding entry in that tensor is added with one. 

Suppose the four features are conditionally independent given the class label. We can then collect all the samples from the same class into one tensor, and invoke Theorem~\ref{thm:kl-pc} to estimate their individual conditional distributions $\mathbb{P}$[feature$|$class], which are shown on the left panel of Fig.~\ref{fig:iris}. This learned conditional distribution can then be used to classify new samples, which is the idea behind the naive Bayes classifier~\cite{ng2002discriminative}.

Now suppose the class labels are \emph{not} given to us. We then collect \emph{all} the data samples into a single tensor. We still assume that the features are conditionally independent given the class label, even though the class label is now latent (unobserved). As per our discussion in Section~\ref{sec:em}, we can still try to estimate the conditional distribution using the EM algorithm~\eqref{eq:em}. We run the EM algorithm from multiple random initializations, and the result that gives the smallest loss is shown on the right of Fig.~\ref{fig:iris}. 

The astonishing observation is that the learned conditional distribution is almost identical to the one learned when the class label is given to us. This suggests that the conditional independence between features given class labels is actually a reasonable assumption in this case, contrary to the common belief that naive Bayes is an ``over-simplified'' model. It is also interesting to notice that this nice result is obtained using only 150 data samples, which is extremely small considering the size of the tensor. 

\section{Conclusion}
We studied the nonnegative CPD problem with generalized KL-divergence as the loss function. 
The most important result of this paper is the discovery that finding the KL principal component of nonnegative tensor is \emph{not} NP-hard. To make matters nicer, we derived a very simple closed-form solution for finding the KL principal component. This is a surprisingly pleasing result, considering that in the field of tensors ``most problems are NP-hard''~\cite{hillar2013most}.
Borrowing the idea for finding the KL principal component, an iterative algorithm for higher rank KL approximation was also derived, which is guaranteed to converge to a stationary point and is easily and naturally parallelizable.

\bibliographystyle{IEEEtran}
\bibliography{refs}

\end{document}